\newcommand{\modif}[1]{{#1}}
\newtheorem{theorem}{Theorem}
\newtheorem{lemma}[theorem]{Lemma}
\newtheorem{proposition}[theorem]{Proposition}
\providecommand{\prt}[1]{\left( #1 \right)}
\providecommand{\PP}{\mathbb{P}}
\providecommand{\0}{\textbf{0}}
\def\0{{\bf 0}}
\def\N{\mathbb{N}}
\def\E{\mathbb{E}}
\def\DD{\mathcal{D}}
\def\NN{\mathcal{N}}
\newcommand{\ie}{{\it i.e. }}
\newcommand{\ssum}{\displaystyle\sum}
\newcommand{\Var}[1]{\text{Var}{#1}}
\newcommand{\rpm}{\raisebox{.2ex}{$\scriptstyle\pm$}}
\newcommand{\ooneovern}{o\left(\frac{1}{n}\right)}
\begin{document}

\title{Open Multi-Agent Systems:\\ Gossiping with Random Arrivals and Departures}

\author{Julien M. Hendrickx and  Samuel Martin
\thanks{Julien Hendrickx is with the ICTEAM institute, Universit\'e catholique de Louvain, Louvain-la-Neuve, Belgium.  {\tt\small julien.hendrickx@uclouvain.be,} 
Samuel  Martin  is  with  Universit\'e
de Lorraine  and  CNRS,  CRAN,  UMR
7039, 2  Avenue  de  la  For\^et
et de Haye, 54518  Vandoeuvre-l\`es-Nancy,  France
{\tt\small samuel.martin@univ-lorraine.fr}}
\thanks{This work was supported by the Belgian Network DYSCO (Dynamical Systems, Control, and Optimization), 
funded by the Interuniversity Attraction Poles Program, initiated by the Belgian Science Policy Office, and by the ANR project COMPACS ANR-13-BS03-0004 and by projects PEPS MoMIS MADRES and PEPS INS2I CONAS funded by the CNRS.}
}

\maketitle

\begin{abstract}
We consider open multi-agent systems. Unlike the systems usually studied in the literature, here agents may join or leave while the process studied takes place. The system composition and size evolve thus with time. We focus here on systems where the interactions between agents lead to pairwise gossip averages, and where agents either arrive or are replaced at random times. These events prevent any convergence of the system.
Instead, we describe the expected system behavior by showing that the evolution of scaled moments of the state can be
characterized by a 2-dimensional (possibly time-varying) linear dynamical system.
We apply this technique to two cases : (i) systems with fixed size where leaving agents are immediately replaced, and (ii) systems where new agents keep arriving without ever leaving, and whose size grows thus unbounded.
\end{abstract}

\section{Introduction}
Two of the most important features of multi-agent systems are their flexibility and scalability. Accordingly, these systems are expected to cope with agent failures and new agent arrivals. Real life examples of the multi-agent systems with such properties include flock of birds, ad-hoc networks of mobile devices, or the Internet. Social systems of various scales have these properties : work teams in companies or laboratories subject to important turnovers, companies themselves, or even entire countries and their cultural norms.

However, the framework typically used to study formal models of multi-agent systems supposes that, while the system may be complex, its composition remains unchanged over time (albeit the interaction topology can evolve). Under this assumption, researchers are able to characterize the long term behavior of the multi-agent system such as convergence and synchronization.

This apparent contradiction is justified when agents arrivals and departures are sufficiently rare as compared to the time-scale of the process taking place in the system.
In such cases, it makes indeed sense to assume that the composition of the systems remains unchanged while the process takes place.

Nevertheless, the probability of a node failure is expected to grow with the number of agents. As a consequence, for large systems, this constant size assumption no longer holds. Similarly, in some systems such as living systems with birth processes for instance, the probability of a node arrival increases with the system size, so that the constant composition assumption also stops being relevant when the system size is large. Companies or human societies are instances of such systems where the system's growth is proportional to its size.
This assumption may also appear unsuitable in extreme environments, where communication is difficult and infrequent, leading to slow convergence rate, relative to which the agent failure rate may be important.

Hence we consider here \emph{open multi-agent systems}, where agents keep arriving and/or leaving during the execution of the process considered, an example of which is illustrated in Figure~\ref{fig:fixed_system_size_with_replacement_small}.\\
Repeated arrivals and departures result in important differences in the analysis or the design of open multi-agent systems and cause several challenges:

\emph{State dimension:} Every arrival results in an increase of the system state dimension, and every departure in a decrease of the system state dimension. Analyzing the evolution of the system state is therefore much more challenging than in \quotes{closed systems}. 

\emph{Absence of usual \quotes{convergence:}} Being continuously perturbed by departures and arrivals, open systems will never asymptotically converge to a specific state (this is clear from Figure~\ref{fig:fixed_system_size_with_replacement_small}). Rather, they may approach some form of steady state behavior, which can be characterized by some relevant descriptive quantities. As in classical control in the presence of perturbations, the choice of the measures is not neutral, and different descriptive quantities may behave in very different way.

\begin{figure}[!htbp]
\begin{center}
\includegraphics[scale=0.44,clip = true, trim=3cm 8cm 3cm 8.4cm,keepaspectratio]{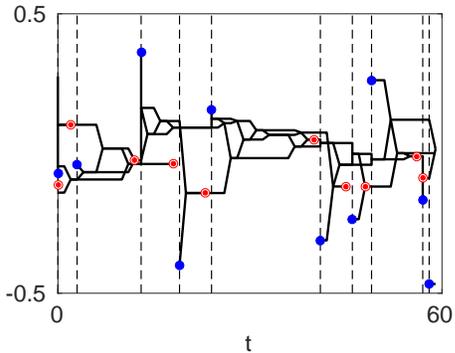}\\
\caption{\label{fig:fixed_system_size_with_replacement_small}
Example of dynamics of an open multi-agent system with random agent replacements and pairwise average gossips. The evolution with time of the agents values are represented by black continuous lines. Red circles highlight the departing agents while the blue circles correspond to the newly arrived agents. Vertical black dashed lines depict the non uniform random replacement instants. The repeated replacements prevent convergence to consensus.
See Section~\ref{sec:fixed-size-system} for a precise description of the system.
}
\end{center}
\end{figure}

\emph{Robustness and quality of the algorithms:}
Although this is not treated here, departures and arrivals also have a fundamental impact on the design of decentralized algorithms over open systems. These events will often imply a loss of information or a change in the algorithm desired result (information held by new agents may indeed affect a value that the algorithm should compute). Hence the algorithms should be robust to departures and arrivals. On the other hand, algorithms over open systems cannot be expected to be \quotes{exact}: When the system composition keeps changing, algorithms able to maintain an approximate answer most of the time may be preferable to those that would eventually provide an exact answer if the system composition were to remain constant.

\subsection{Contribution}

We provide results which are part of an ongoing study on open multi-agent systems. Here, we study systems where the interactions among agents occur in discrete-time and take the form of an average pairwise gossip algorithm \cite{boyd2006randomized}. We assume all-to-all (possible) communications, focusing on systems where departures and arrivals take place at random times, see Section~\ref{sec:sys-description} for a complete definition.

We analyze the system evolution in terms of two \quotes{scale-independent} quantities. We find that the first two expected moments form such relevant quantities : namely we study 
the expected square mean $\E(\bar x ^2)$ and the expected mean square $\E(\overline {x^2})$ of the system state $x$. These quantities also provide the evolution of the expected variance $\E(\overline {x^2} - \bar x^2)$. We show in Section~\ref{sec:effect_operations} that these quantities can be characterized exactly, and that they evolve according to an associated 2-dimensional linear system.

In Section~\ref{sec:fixed-size-system}, we analyze in detail the case of systems with replacements taking place at random time : a departure is immediately followed by an arrival. Each time an event occurs, it is either a replacement or a gossip step with a certain probability. In other words, between two consecutive replacements, $K$ gossip steps occur, where $K$ is a non-negative integer random variable. As part of the study, we characterize the variable $K$. We then focus in Section~\ref{sec:growing} on growing systems that agent keep joining without ever leaving. It will in particular be shown that random arrivals or replacements can result in a significant performance decrease in terms of variance.

Results on simplified versions of the systems considered here were presented at the Allerton Conference on Communication Control and Computing\cite{HendrickxMartin2016allerton}. The main differences with \cite{HendrickxMartin2016allerton} are (i) the arrivals and replacements are probabilistic events in this paper, while they followed a deterministic (and generally periodic) sequence in \cite{HendrickxMartin2016allerton}, (ii) a study of the system convergence rate and the interpretation of the corresponding eigenvectors, and (iii) a different choice of the moments studied, allowing for simpler proofs. 

\subsection{Other works on open multi-agent systems}

The possibility of agents joining or leaving the system has been recognized in computer science, and specific architectures have for example been proposed to deploy large-scale open multi-agent systems, see e.g. THOMAS project \cite{carrascosa2009service}. There also exist mechanisms allowing distributed computation processes to cope with the shut down of certain nodes or to take advantage of the arrival of new nodes. 

Frameworks similar to open multi-agent arrivals have also been considered in the context of trust and reputation computation, motivated by the need to determine which arriving agents may be considered reliable, see e.g. the model FIRE \cite{huynh2006integrated}. However, the study of these algorithms's behavior is mostly empirical.

Varying compositions were also studied in the context of self-stabilizing population protocols \cite{angluin2008self,delporte2006birds}, where interacting agents (typically finite-state machines) can undergo temporary or permanent failures, which can respectively represent the replacement or the departure of an agent. The objective in those works is to design algorithms that eventually stabilize on the desired answer if the system composition stops changing, i.e. once the system has become \quotes{closed}.

Opinion dynamics models with arrivals and departures have also been \modif{empirically} studied in \cite{torok2013opinions,iniguez2014modeling}.

\section{System Description}\label{sec:sys-description}

We consider a multi-agent system whose composition evolves with time. We use integers to label the agents, denote by $\NN(t)\subset \N$ the set of agents present in the system at time $t$, and by $n(t)$ the number of agents present at time $t$, i.e. the cardinality of $\NN(t)$. Each agent $i$ holds a value $x_i(t)\in \Re$, and we make no assumptions about the values held at $t=0$ by the agents initially present in the system. 

We consider a discrete evolution of the time $t\in \N$. It is possible to interpret the discrete time $t$ as a sampling of a continuous time variable. Samples then correspond to instants where an event occurred. We will comment later on this interpretation and on its implication on the scaling of different parameters. At each time $t$, one of three events may occur:

\emph{(a) Gossip:} Two agents $i,j\in \NN(t)$ are uniformly randomly and independently selected among the $n(t)$ agents present in the system (with in particular the possibility of selecting twice the same agent), and they update their values $x_i,x_j$ by performing a pairwise average:
\begin{equation}\label{eq:gossip}
x_i(t+1) = x_j(t+1) = \frac{x_i(t)+x_j(t)}{2}.
\end{equation}

\emph{(b) Departure:} One uniformly randomly selected agent $i\in \NN(t)$ leaves the system, so that $\NN(t+1) = \NN(t)\setminus \{i\}$ and $n(t+1) = n(t) - 1$. This event may only occur if $n(t)>0$.

\emph{(c) Arrival:} One \quotes{new} agent $i\not \in \NN(s)$, $\forall s\leq t$, joins the system, so that $\NN(t+1) = \NN(t) \cup \{i\}$ and $n(t+1) = n(t)+1$. The initial value $x_i(t+1)\in \Re$ of the arriving agent is drawn independently from a constant distribution $\DD$ with mean $0$ and variance $\sigma^2$. (The results can immediately be adapted to systems where the mean of the arriving agents states is an arbitrary constant).

In addition, we will sometimes consider for simplicity a \quotes{replacement} event, which consists of the instantaneous combination of a departure and an arrival: \modif{an agent leaves the system and is instantaneously replaced.}

Note that all the random events above are assumed independent of each other.

\subsection*{\modif{Scale-independent quantities of interest}}\label{sec:scale-independent-quantities-of-interest}

The aim of the study is to characterize the disagreement among agents, \ie the distance to consensus. We say that consensus is reached asymptotically when
\begin{equation}\label{eq:class_consensus}
\lim_{t \rightarrow \infty} \,  \max_{(i,j) \in \NN(t)^2} |x_i(t) - x_j(t)| = 0.
\end{equation}
If the system dynamics does not include agent departures or arrivals, it is known that the gossip process we consider leads to consensus, see e.g.\cite{FagnaniZampieri2007,boyd2006randomized}. The objective here is to understand how agent arrivals and departures impact the disagreement among agents.
To do so, we study several quantities of interest. Because the system size may change significantly with time, we focus on scale-independent quantities, i.e. quantities whose values is independent of the size of the system. We consider in particular 
the empirical mean of the squares
and the variance defined as
\begin{equation}\label{eq:def-mean-variance-xk-elementwise} 
\begin{array}{rl}
&\overline{x^2} = \frac{1}{n} \ssum_{i \in \NN} x_i^2, \\
 &\Var(x)= \frac{1}{n} \ssum_{i \in \NN} (x_i - \bar{x})^2 = \overline{x^2} - (\bar{x})^2,
\end{array}
\end{equation}
respectively, where references to time were removed to lighten the notation. Our study will focus on the evolution of $\E \Var(x)$, which will also require monitoring $\E (\bar x)^2$ and $\E \overline{x^2}$. When new agents keep arriving it is impossible to achieve asymptotic consensus in the sense of \eqref{eq:class_consensus}, because the new agent's value will with high probability be different from the value of the agents already present in the system. The study of $\E \Var(x)$ will allow us to see how \quotes{far} the system will be from consensus. But we will see that in \modif{certain systems whose sizes grow unbounded}, we may have $\lim_{t\to\infty} \E \Var(x) =  0$, corresponding to a form of \quotes{almost consensus}. The expected mean $\E \bar{x}$ could also have been monitored. It evolves following an independent one-dimensional linear system. However, we skip this part of the study due to space limitations.

\section{Effect of different operations}\label{sec:effect_operations}

In the sequel we will show that the evolution of the expected moments $\E(\bar{x}^2)$ and $\E\overline{x^2}$ is governed by an affine system from which we will derive the evolution of $\E \Var(x)$. For notational simplicity we denote by $X$ the vector containing $\bar x^2$ and $\overline {x^2}$ so that $\Var(x) = (-1,1) X$.

\begin{lemma}[Gossip step]\label{lem:gossip_moment}
Suppose that a randomly selected pair of agents engage in a gossip averaging according to equation~\eqref{eq:gossip}. Let $x$ be the state of the system before that interaction, $x'$ its state after the interaction and $n$ the number of agents. There holds
\begin{equation}\label{eq:matrix_gossip_moment}
\E X' = A_g \E X,
\text{ where }
 A_g = \prt{\begin{array}{cc} 1 &  0 \\ \frac{1}{n} & 1 - \frac{1}{n} \end{array}},
\end{equation}
and as a consequence
\begin{equation}\label{eq:variance_gossip}
\E \Var(x') = \left(1- \frac{1}{n}\right) \E \Var(x). 
\end{equation}
\end{lemma}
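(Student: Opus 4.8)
The plan is to decompose a single gossip step into its effect on each component of $X$, first conditioning on the current state $x$ and the randomly chosen pair, then averaging over the pair, and finally over any randomness in $x$ itself. The two rows of $A_g$ will come from handling $\bar x^2$ and $\overline{x^2}$ separately.

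First I would observe that a pairwise average leaves the total sum $\sum_k x_k$ unchanged: if agents $i$ and $j$ both become $(x_i+x_j)/2$ their joint contribution is preserved, and every other agent is untouched. Hence $\bar x' = \bar x$ holds deterministically, regardless of which pair is drawn, so $(\bar x')^2 = \bar x^2$ and $\E(\bar x')^2 = \E\bar x^2$. This immediately fixes the first row of $A_g$ as $(1,0)$.

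The second row requires tracking $\overline{x^2}$. Since only $i$ and $j$ change, I would compute the increment of $\sum_k x_k^2$ directly: the pair contributes $x_i^2 + x_j^2$ before and $2((x_i+x_j)/2)^2$ after, so the sum decreases by exactly $\tfrac12(x_i-x_j)^2$, giving $\overline{x'^2} = \overline{x^2} - \tfrac{1}{2n}(x_i-x_j)^2$. The only substantive computation is then averaging $(x_i-x_j)^2$ over the uniformly and independently chosen pair: expanding $\frac{1}{n^2}\sum_{i,j}(x_i-x_j)^2$ into its three sums yields $2\overline{x^2} - 2\bar x^2 = 2\,\Var(x)$. Substituting back gives $\E[\overline{x'^2}\mid x] = \overline{x^2} - \tfrac1n\Var(x) = \tfrac1n\bar x^2 + (1-\tfrac1n)\overline{x^2}$, which is precisely the second row $(\tfrac1n, 1-\tfrac1n)$. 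Taking the outer expectation over $x$ and combining with the first row establishes \eqref{eq:matrix_gossip_moment}.

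Finally, for \eqref{eq:variance_gossip} I would apply $\Var(x) = (-1,1)X$ and compute $(-1,1)A_g = (1-\tfrac1n)(-1,1)$, so that $\E\Var(x') = (1-\tfrac1n)\E\Var(x)$; equivalently, since $\bar x' = \bar x$ one sees directly that $\E[\Var(x')\mid x] = \E[\overline{x'^2}\mid x] - \bar x^2 = \Var(x) - \tfrac1n\Var(x)$. There is no genuine obstacle here: the argument is essentially bookkeeping, and the only points demanding care are keeping the two layers of expectation distinct and correctly handling the with-replacement sampling in the averaging step — though the latter is in fact benign, since the terms with $i=j$ vanish and the ordered double sum over all $n^2$ outcomes is exactly the natural object.
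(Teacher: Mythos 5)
Your proof is correct and follows essentially the same route as the paper's: fix the pair, note that the mean is invariant (first row), compute the change in $\overline{x^2}$ and average over the uniformly and independently chosen pair using $\E(x_ix_j\mid x)=\bar x^2$ and $\E(x_i^2\mid x)=\overline{x^2}$ (second row). Writing the decrement as $\tfrac{1}{2n}(x_i-x_j)^2$ rather than the paper's $\tfrac1n\bigl(x_ix_j-\tfrac12x_i^2-\tfrac12x_j^2\bigr)$ is only a cosmetic repackaging of the same algebra.
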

Notice that this result or its variations are available in many previous works, but its proof is presented for the sake of completeness.

\begin{proof}
let us first fix the nodes $i,j$ involved in the gossip. Observe  that $x_i'+x_j' = 2\frac{x_i+x_j}{2}=x_i+x_j$, and that $x_k'=x_k$ for all $k\neq i,j$. Hence $\bar x' = \bar x$, which establishes the fist line of \eqref{eq:matrix_gossip_moment}. For the second line, since $x_k=x'_k$ for every $k\neq i,j$, there holds
\begin{align}\label{eq:gossip_avg_x2}
\overline{{x'}^2} &= \frac{1}{n}\sum_{k=1}^n{x'}_k^2 = \overline{x^2} + \frac{1}{n}\prt{ 2\prt{\frac{x_i+x_j}{2}}^2- x_i^2-x_j^2} \nonumber \\
&= \overline{x^2} + \frac{1}{n}\prt{x_ix_j - \frac{1}{2}x_i^2-\frac{1}{2}x_j^2}
\end{align}
Observe that $E(x_i^2|x)= E(x_j^2|x)= \overline{x^2}$ and $E(x_ix_j|x)=\bar x^2$. Taking the expectation with respect to $i$ and $j$ in \eqref{eq:gossip_avg_x2} yields
\begin{align*}
\E( \overline{{x'}^2}| x)
= \prt{1-\frac{1}{n}}\overline{x^2} + \frac{1}{n} \bar x^2, 
\end{align*}
from which the second line of \eqref{eq:matrix_gossip_moment} follows.
\end{proof}

\begin{lemma}[Arrival of the $n+1$-th agent]\label{lem:arrival_moments_n+1}
Suppose that an agent arrives into the system bringing the number of agents from $n$ to $n+1$. Denote $x$ the state before arrival, $x'$ the state after arrival. Then, there holds
\begin{align}\label{eq:system-arrival}
\E X'  &= A_a \E X + b_a
\end{align}
where
\[
A_a = \prt{\begin{array}{cc} 
\frac{n^2}{(n+1)^2}& 0\\
0& \frac{n}{n+1}
\end{array}} \text{ and } 
b_a = \sigma^2 \prt{\begin{array}{c} \frac{1}{(n+1)^2} \\ \frac{1}{n+1}
\end{array}}
\]
\end{lemma}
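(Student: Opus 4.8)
The plan is to express the two post-arrival scale-independent quantities directly in terms of the pre-arrival state $x$ and the value $y$ of the incoming agent, and then take expectations, exploiting that $y$ is drawn from $\DD$ independently of $x$ with $\E y = 0$ and $\E y^2 = \sigma^2$. The whole argument reduces to two elementary algebraic identities followed by conditioning on $x$.

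First I would record how the empirical mean and the empirical mean of squares transform when the $(n+1)$-th agent joins. Since the arrival simply appends $y$ to the state while the denominator grows from $n$ to $n+1$, there holds $\bar x' = \frac{n\bar x + y}{n+1}$ and $\overline{{x'}^2} = \frac{n\,\overline{x^2} + y^2}{n+1}$. These two relations are the only structural input; everything else is averaging. For the second component of $X$ the computation is immediate, since $\overline{{x'}^2}$ is affine in $y^2$: conditioning on $x$ and using $\E(y^2\mid x) = \sigma^2$ gives $\E(\overline{{x'}^2}\mid x) = \frac{n}{n+1}\overline{x^2} + \frac{\sigma^2}{n+1}$, and taking the outer expectation yields exactly the second row of $A_a$ together with the second entry of $b_a$.

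For the first component I would square the expression for $\bar x'$, obtaining $(\bar x')^2 = \frac{n^2\bar x^2 + 2n\bar x\, y + y^2}{(n+1)^2}$, and then take the conditional expectation given $x$. The decisive point is that the cross term $2n\bar x\, y$ vanishes in expectation because $y$ is independent of $x$ and has zero mean, while $\E(y^2\mid x)=\sigma^2$ contributes the inhomogeneous part. This produces $\E((\bar x')^2\mid x) = \frac{n^2}{(n+1)^2}\bar x^2 + \frac{\sigma^2}{(n+1)^2}$, which upon taking the outer expectation matches the first row of $A_a$ and the first entry of $b_a$, establishing \eqref{eq:system-arrival}.

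There is no genuine obstacle here; the result follows from the two transformation identities and the first two moments of $\DD$. The only subtlety worth stating explicitly is the use of the tower property (conditioning on $x$ first and then averaging) together with the independence of the incoming value from the current state. It is precisely this independence, combined with $\E y = 0$, that annihilates the cross term and thereby forces the off-diagonal entries of $A_a$ to be zero, so that the arrival acts as a decoupled affine map on the two moments.
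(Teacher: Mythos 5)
Your proof is correct and follows essentially the same route as the paper's: write $\bar x'$ and $\overline{{x'}^2}$ in terms of the pre-arrival moments and the newcomer's value, condition on $x$, and use $\E y = 0$, $\E y^2 = \sigma^2$ to kill the cross term and produce the affine part. (Incidentally, your expansion even fixes a harmless typo in the paper, which drops the factor $2$ on the vanishing cross term.)
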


\begin{proof}
We label $n+1$ the arriving agent for simplicity, so that $x'_k=x_k$ for all $k\leq n$.  We begin by computing the new average :
\begin{align}\label{eq:arrival_average}
\bar x' &= \frac{1}{n+1}\prt{x'_{n+1} + \sum_{k=1}^{n}x_k  }= \frac{n}{n+1} \bar x + \frac{1}{n+1}x'_{n+1}.
\end{align}
Since $\E x'_{n+1} = 0$, we have $\E(\bar x'|x) = \frac{n}{n+1}\bar x$. By exactly the same reasoning but using $\E {x'_{n+1}}^2 = \sigma^2$  we also obtain
\begin{align}\label{eq:arrival_x^2}
\E(\overline{x'^2}|x) &= \frac{n}{n+1}\overline{x^2} + \frac{1}{n+1}\sigma^2,
\end{align}
from which the second line of \eqref{eq:system-arrival} follows.
Turning to the first line, 
we obtain from \eqref{eq:arrival_average}
\begin{align*}
\E((\bar x')^2|x) =& \frac{1}{(n+1)^2}\prt{n^2(\bar x)^2 + n \bar x \E x'_{n+1}+\E (x'_{n+1})^2} \\
=& \frac{n^2}{(n+1)^2} (\bar x)^2 + 0  +\frac{1}{(n+1)^2} \sigma^2.
\end{align*}
\end{proof}

\begin{lemma}[Departure]\label{lem:departure_moment}
Suppose that a randomly selected agent departs from the system. Denote $x$ the state before departure, $x'$ the state after departure and $n\geq 2$ the number of agents before departure. Then, there holds
\begin{equation}\label{eq:departure}
\E X'=  \prt{\begin{array}{cc} \frac{n^2-2n}{(n-1)^2} & \frac{1}{(n-1)^2}\\ 0 & 1\end{array}} \E X.
\end{equation}
\end{lemma}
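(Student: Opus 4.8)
The plan is to follow exactly the template of Lemmas~\ref{lem:arrival_moments_n+1} and~\ref{lem:gossip_moment}: condition on the pre-departure state $x$, average over the uniformly random choice of the departing agent, and then invoke the tower property to pass from the conditional expectation to $\E X'$. Concretely, I would label $i \in \NN$ the departing agent, so that the surviving values are exactly $\{x_k : k \neq i\}$ and there are $n-1$ of them. The two quantities of interest after departure can then be written directly in terms of the old moments and of the single removed value $x_i$.

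First I would record the two elementary bookkeeping identities. Since $\sum_{k \neq i} x_k = n\bar x - x_i$ and $\sum_{k\neq i} x_k^2 = n\overline{x^2} - x_i^2$, we get $\bar x' = \prt{n\bar x - x_i}/(n-1)$ and $\overline{x'^2} = \prt{n\overline{x^2} - x_i^2}/(n-1)$. The assumption $n \geq 2$ guarantees the denominator $n-1$ is nonzero. Next I would use the two moment facts that come from $i$ being uniform on the $n$ present agents: $\E(x_i \mid x) = \bar x$ and $\E(x_i^2 \mid x) = \overline{x^2}$. Applying the second one immediately gives $\E(\overline{x'^2}\mid x) = \prt{n\overline{x^2} - \overline{x^2}}/(n-1) = \overline{x^2}$, which yields the second row $(0,1)$ of the matrix.

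For the first row I would expand the square, $(\bar x')^2 = \prt{n^2\bar x^2 - 2n\bar x\, x_i + x_i^2}/(n-1)^2$, and take the conditional expectation term by term, using $\E(x_i\mid x)=\bar x$ on the cross term and $\E(x_i^2\mid x)=\overline{x^2}$ on the last term. This collapses $n^2\bar x^2 - 2n\bar x^2$ into $(n^2-2n)\bar x^2$ and produces the entry $\tfrac{1}{(n-1)^2}\overline{x^2}$, giving the first row $\prt{\tfrac{n^2-2n}{(n-1)^2},\ \tfrac{1}{(n-1)^2}}$. Finally, since every conditional expectation is linear in the components of $X = (\bar x^2, \overline{x^2})^{\!\top}$, taking the outer expectation over $x$ yields $\E X' = M\, \E X$ with $M$ the stated matrix.

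The only genuinely nontrivial point, as opposed to pure bookkeeping, is the cross term $-2n\bar x\, x_i$ in $(\bar x')^2$: unlike in the arrival case, the removed value $x_i$ is correlated with the current configuration, so one cannot treat it as independent noise, and it is precisely the identity $\E(x_i\mid x)=\bar x$ (the removed agent is, on average, \emph{typical}) that both handles this term and, through $\E(x_i^2\mid x)=\overline{x^2}$, creates the off-diagonal coupling by which $\overline{x^2}$ feeds into $\bar x'^2$. I expect this to be the main step to get right; everything else is routine algebraic simplification that I would not grind through in detail.
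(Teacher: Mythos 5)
Your proposal is correct and follows essentially the same route as the paper's own proof: the same bookkeeping identities $\bar x' = \frac{1}{n-1}(n\bar x - x_j)$ and $\overline{x'^2} = \frac{1}{n-1}(n\overline{x^2} - x_j^2)$, the same moment facts $\E(x_j\mid x)=\bar x$ and $\E(x_j^2\mid x)=\overline{x^2}$ from the uniform choice of the departing agent, and the same expansion of the square followed by the tower property. Nothing is missing; your closing remark about the cross term being the one non-routine step is also consistent with how the paper's computation actually proceeds.
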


\begin{proof}
Let $j$ be the randomly selected agent that leaves the system. The mean is modified as
\begin{align}\label{eq:mean_departure}
\bar x' = \frac{1}{n-1}\prt{\prt{\sum_{k=1}^n x_k} - x_j}
= \frac{1}{n-1}\prt{n\bar x - x_j}.
\end{align}
By exactly the same reasoning, there holds
$\overline {x'^2}= \frac{1}{n-1}\prt{n \overline{x^2} - x_j^2}.$
Since $j$ is randomly selected, $\E(x_j^2|x) = \overline{x^2}$. Hence, 
\begin{align*}
\E (\overline {x'^2}|x) = \frac{1}{n-1}\prt{n \E \overline{x^2} - \E \overline{x^2}}=\E \overline {x^2},
\end{align*}
which implies the second line of \eqref{eq:departure}. For the first line, taking into account $\E(x_j|x) = \bar x$, it follows from \eqref{eq:mean_departure} that
\begin{align*}
\E ( (\bar x')^2|x)
&= \frac{1}{(n-1)^2}\prt{n^2(\bar x)^2 - 2n(\bar x)\E(x_j|x) + \E(x_j^2|x)}\\
&=  \frac{n^2-2n}{(n-1)^2}(\bar x)^2  + \frac{1}{(n-1)^2} \overline{x^2}.
\end{align*}
\end{proof}

We now consider the replacement of an agent, which consists of a departure immediately followed by an arrival. The next result follows from a combination of Lemma \ref{lem:departure_moment} and \ref{lem:arrival_moments_n+1}, the latter applied to a system of size $n-1$ joined by a $n^{th}$ agent.

\begin{lemma}[Replacement]\label{lem:replacement_matrix_moment}
Suppose that a randomly selected agent departs from the system and is immediately replaced by a new agent, leaving the size of the system unchanged. Denote $x$ the state before replacement, $x'$ the state after replacement and $n$ the constant number of agents. Then, there holds
\begin{equation}\label{eq:exp_replacement}
\E X' = A_r \E X + b_r,
\end{equation}
where
\[
A_r = \prt{\begin{array}{cc} 
\frac{n-2}{n}& \frac{1}{n^2}\\
0& \frac{n-1}{n}
\end{array}} \text{ and } b_r = \prt{\begin{array}{c} \frac{\sigma^2}{n^2} \\ \frac{\sigma^2}{n}
\end{array}}.
\]
\end{lemma}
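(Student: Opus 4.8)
The plan is to obtain the replacement map as the composition of the departure map from Lemma~\ref{lem:departure_moment} and the arrival map from Lemma~\ref{lem:arrival_moments_n+1}, exactly as the preamble to the statement suggests. Starting from a system of $n$ agents, a departure first brings the size from $n$ to $n-1$; then an arrival brings it back from $n-1$ to $n$. Writing $\E X'' = M_d \E X$ for the departure step (with $n$ agents before departure) and $\E X' = A_a \E X'' + b_a$ for the arrival step, the key observation is that the arrival lemma must be instantiated with its ``$n$'' equal to $n-1$, since the arriving agent is the $n$-th agent joining a system of size $n-1$. I would therefore substitute $n \mapsto n-1$ throughout the expressions for $A_a$ and $b_a$ before composing.

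Concretely, I would first write down the departure matrix
\[
M_d = \prt{\begin{array}{cc} \frac{n^2-2n}{(n-1)^2} & \frac{1}{(n-1)^2}\\ 0 & 1\end{array}},
\]
and then the arrival data evaluated at size $n-1$:
\[
A_a = \prt{\begin{array}{cc} \frac{(n-1)^2}{n^2} & 0\\ 0 & \frac{n-1}{n}\end{array}}, \quad
b_a = \sigma^2 \prt{\begin{array}{c} \frac{1}{n^2}\\ \frac{1}{n}\end{array}}.
\]
The replacement update is then $\E X' = A_a (M_d \E X) + b_a = (A_a M_d)\E X + b_a$, so that $A_r = A_a M_d$ and $b_r = b_a$. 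The claimed formula for $b_r$ is immediate from this identification, so the only real work is the matrix product $A_a M_d$.

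Carrying out that product, the $(1,1)$ entry is $\frac{(n-1)^2}{n^2}\cdot\frac{n^2-2n}{(n-1)^2} = \frac{n-2}{n}$, the $(1,2)$ entry is $\frac{(n-1)^2}{n^2}\cdot\frac{1}{(n-1)^2} = \frac{1}{n^2}$, the $(2,1)$ entry is $0$, and the $(2,2)$ entry is $\frac{n-1}{n}\cdot 1 = \frac{n-1}{n}$, which reproduces exactly the stated $A_r$. The computation is entirely routine once the maps are composed in the correct order.

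The main subtlety — and the only place where an error could creep in — is the bookkeeping of \emph{which} value of $n$ enters the arrival lemma, together with the order of composition. Since a replacement is a departure \emph{followed} by an arrival, the arrival matrix $A_a$ must sit on the left in $A_a M_d$, and it must be evaluated at the intermediate size $n-1$ rather than at $n$; applying it at size $n$ or composing in the wrong order would give an incorrect $A_r$. Beyond this, one should note that the arrival operation is independent of the departure (the incoming value is drawn independently from $\DD$), which is what justifies simply substituting the post-departure expected moments into the arrival lemma and adding the deterministic offset $b_a$; no cross terms appear because $\E x'_{n+1} = 0$ already. With the order and the size substitution pinned down, the result follows by direct computation.
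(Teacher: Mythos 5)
Your proposal is correct and follows exactly the route the paper itself indicates: composing the departure map of Lemma~\ref{lem:departure_moment} with the arrival map of Lemma~\ref{lem:arrival_moments_n+1} instantiated at size $n-1$, giving $A_r = A_a M_d$ and $b_r = b_a$. The paper leaves this composition as a remark without carrying out the product, so your computation simply fills in the details of the same argument.
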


\section{Fixed-size system with random replacement}\label{sec:fixed-size-system}

\subsection{Model and fixed points}\label{sec:probabilistic-model}

We assume now that at each time step, there is a probability $p$ that an agent is replaced, and a probability $1-p$ that a gossip step takes place. After giving an exact value for the fixed point of the expected system behavior, we will provide asymptotic results when the system size $n$ is large. Note that when considering results for large $n$, it is natural to keep $p$ constant, although other forms of scaling could be considered. Suppose indeed that our discrete times correspond to the sampling of a continuous-time process at those times at which an event occurs. We suppose that the rates of gossip and of departure (leading to a replacement) of a single agent are both independent of the system size, which would be natural for large systems. As a result, the rates of gossip and replacements at the system level both scale linearly with $n$, so that the probability $p$ that a randomly selected event is a replacement remains independent of $n$.

The following result, based on Lemma \ref{lem:replacement_matrix_moment} and equation \eqref{eq:matrix_gossip_moment}, describes the expected evolution of the system. 
\begin{theorem}\label{th:linear-system-random-event}
Suppose that an event occurs. This event is an agent replacement with probability $p$ or gossip with $1-p$.
Denote $x$ the state before the event, $x'$ the state after. Then, there holds
\begin{equation}\label{eq:proba_iteration1}
\E X' = \prt{\begin{array}{cc}
1 -\frac{2p}{n}& \frac{p}{n^2}\\
\frac{1-p}{n}& 1- \frac{1}{n}
\end{array}} \E X + \sigma^2  \prt{\begin{array}{c} \frac{p}{n^2} \\ \frac{p}{n}
\end{array}}.
\end{equation}
\end{theorem}

\begin{proof}
Since the probability of the events are independent of $x$, the conditional expected value is computed as follows :
\begin{align*}
\E(X'|x)  & = (1-p) \E(X'|\text{gossip},x) + p \E(X'|\text{repl.},x)\\&= 
(1-p) A_g X + p (A_r X + b_r)\\
&= ((1-p) A_g + p A_r) X + p b_r.
\end{align*}
Therefore, there holds
$
\E X' = ((1-p) A_g + p A_r) \E X + p b_r,
$ which yields \eqref{eq:proba_iteration1}.
\end{proof}

One can verify that the fixed point of \eqref{eq:proba_iteration1} is
\begin{align}\label{eq:fixed_moment}
\E (\bar x)^2\rvert_{eq} &= \frac{p+1}{p + 2n-1}\sigma^2\nonumber\\
\E \overline {x^2}\rvert_{eq} &= \frac{1+p(2n-1)}{p+2n-1}\sigma^2
\end{align}
leading to a variance $\E \Var(x)\rvert_{eq} = \sigma^2 \frac{2p(n-1)}{p+2n-1}$.

The asymptotic values of these expressions admit some interpretation. Suppose first that $p=1$, meaning that no gossip ever takes place. We obtain then a variance $\E \Var(x)\rvert_{eq} =\sigma^2 (1-\frac{1}{n})$, and an expected square of the average $\E({\overline x})^2\rvert_{eq} = \frac{\sigma^2}{n}$ consistently with a process where agents are just replaced, \textit{i.e.}, a system eventually consisting of agents with $n$ random i.i.d. values with mean 0 and variance $\sigma^2$. (This is also the fixed point of the affine equation in Lemma~\ref{lem:replacement_matrix_moment}). For $p\to 0$, the number of gossips steps between two replacements tends to infinity, so that a perfect averaging takes place before any replacement. We obtain in that case a variance $\E\Var(x)\rvert_{eq} =0$, and an expected square average $\E({\overline x})^2\rvert_{eq} = \sigma^2\frac{1}{2n-1}$. This latter number is lower than what would be obtained by averaging $n$ i.i.d. values. This is because it actually results from a weighted average of the values of all agents having been part of the system the system at some present or past time. See Section IV.c of \cite{HendrickxMartin2016allerton} for a detailed computation of this value in a system with deterministic replacements.

For large $n$ and constant $p$, which we have argued above is a natural scaling, the expected square $\E({\overline x})^2\rvert_{eq}$  goes to 0, while the variance $\E\Var(x)\rvert_{eq}$ goes to $\sigma^2 p$. This result is parallel with that obtained in \cite{HendrickxMartin2016allerton} for periodic replacement, taking into account that the average number of gossip steps between two replacements is $\frac{1}{p}-1$. We will come back to this value $\sigma^2p$ in Section \ref{sec:growing}.

To illustrate Theorem~\ref{th:linear-system-random-event}, we consider an open system with random events (replacements or gossip steps) 
with
$n=25$ agents and replacements occur with probability $p=0.05$ while gossip steps occur with probability $1-p$. The system has evolved until it has reached $100$ replacements. Arriving agent values are drawn uniformly in $[-\frac{1}{2},\frac{1}{2}]$ so that $\sigma^2=\frac{1}{12}$. Figure~\ref{fig:fixed_system_size_with_replacement} displays a realization of the trajectories along with the expected dynamics for the scale-invariant quantities. In the top plot, it appears that agents leaving the system (in red) tend to have a more moderate state compared to agents arriving in the system (in blue). This is due to the gossip steps. Despite not leading to a consensus, the open system still presents a contracting tendency.
Besides, as seen in the bottom plot, for a sufficient number of agents (here 25), the expected variance rather well approximates the dynamics of empirical variance realization. Also for this number of agents, the square mean remains small (of order $10^{-3}$) compared to the mean square, as a consequence, the variance is mainly due to $\overline{x^2}$. This would not be the case for $n=5$ agents for instance.

The illustration provided in the introduction (Figure~\ref{fig:fixed_system_size_with_replacement_small}) was obtained for an open multi-agent system of the same kind with $n=4$ agents and a replacement probability $p=0.1$ where the dynamics consider the system up to $10$ replacements.

\begin{figure}[!htbp]
\begin{center}
A realization of trajectories
\includegraphics[scale=0.44,clip = true, trim=3cm 8cm 3cm 8.4cm,keepaspectratio]{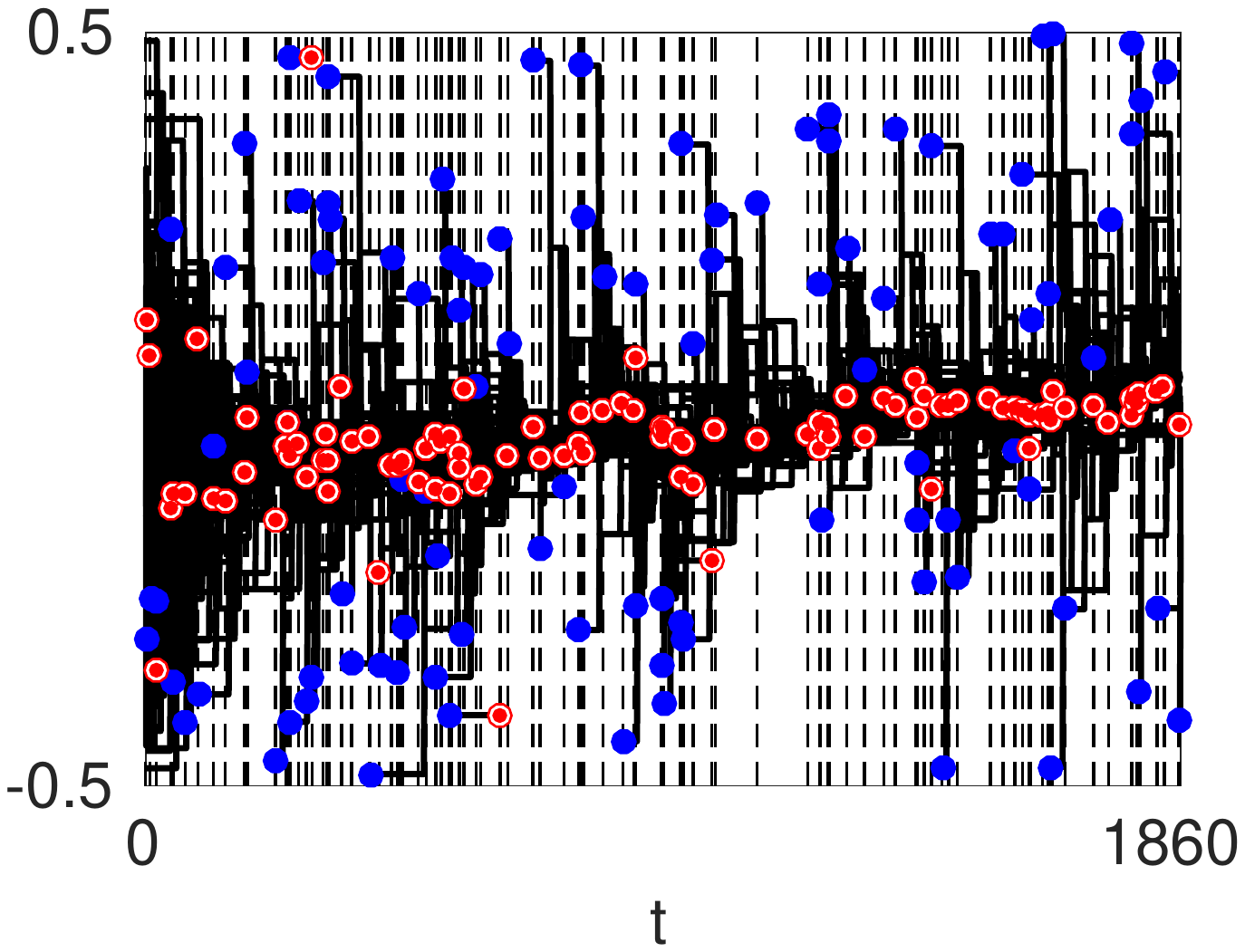}\\
\includegraphics[scale=0.44,clip = true, trim=3cm 8cm 3cm 8.4cm,keepaspectratio]{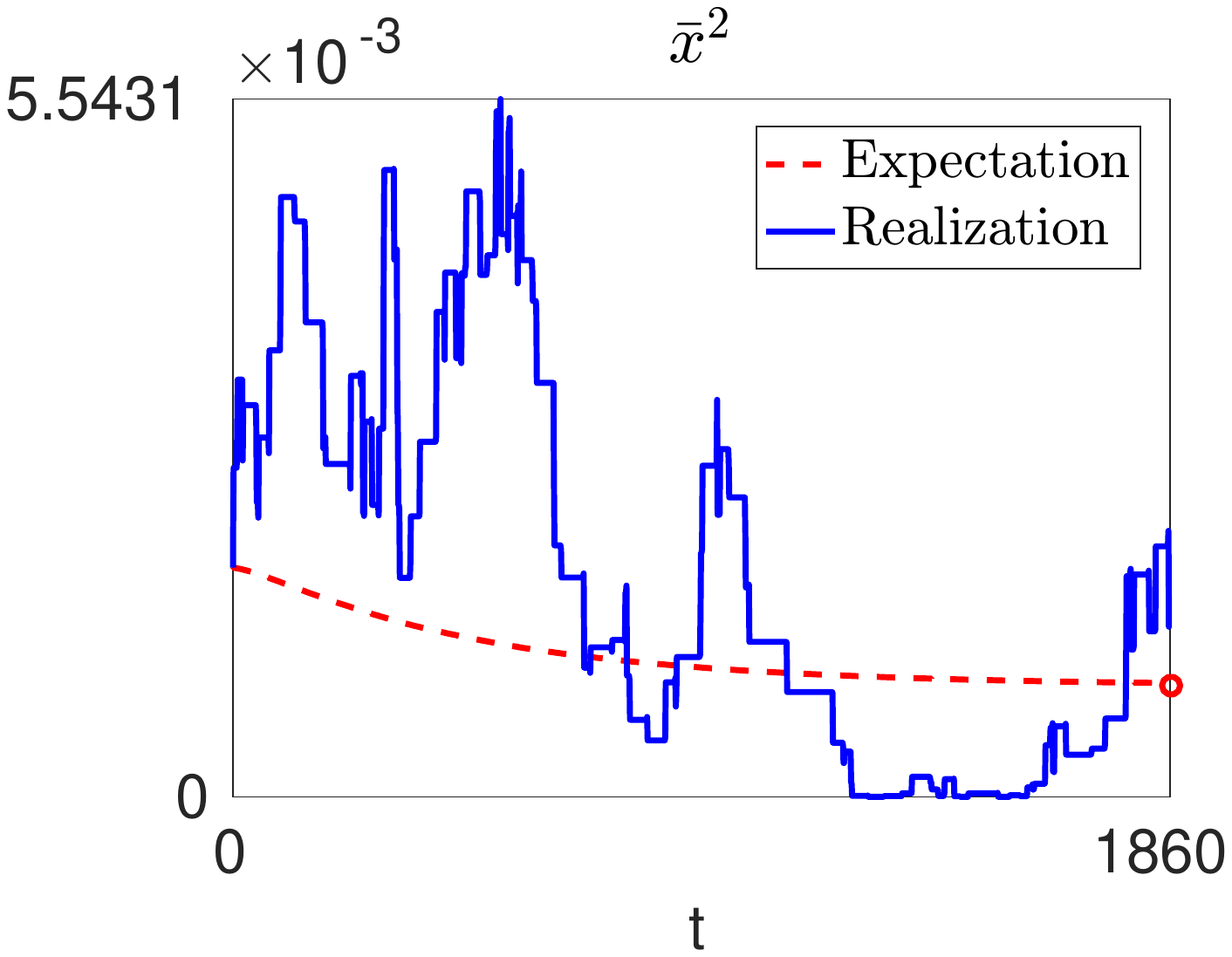}\\
\includegraphics[scale=0.44,clip = true, trim=3cm 8cm 3cm 8.4cm,keepaspectratio]{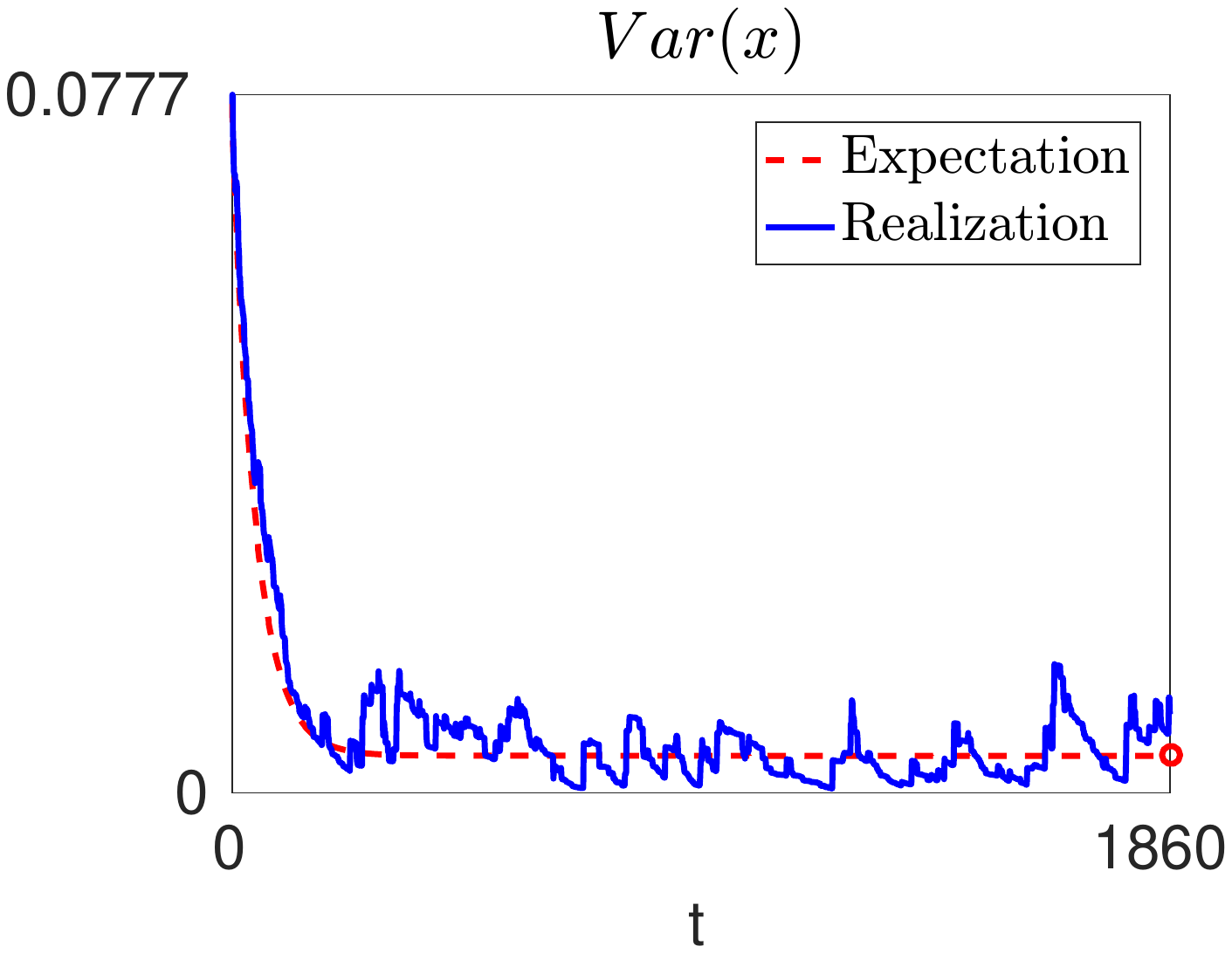}\\
\caption{\label{fig:fixed_system_size_with_replacement}
Illustration of an open system with random events.
(top) shows the evolution with time of the agents values (in black) for a typical realization. Red circles highlight the departing agents while the blue circles correspond to the newly arrived agents. Vertical black dashed lines depict the non uniform random replacement instants. (middle) shows the evolution of the square mean value (realization in continuous blue line, expectation in dashed red line). (bottom) shows the evolution of the expected variance (realization in continuous blue line, expectation in dashed red line). Expectation were computed using Theorem \ref{th:linear-system-random-event}. The asymptotic values given in equations \eqref{eq:fixed_moment} are provided for the expected square mean and the expected variance in red circle at final time in the middle and bottom plots.
}
\end{center}
\end{figure}
\subsection{Convergence rate}

We now study the rate at which the expected moments will converge to the fixed points described above. Eigenvalues of the matrix in \eqref{eq:proba_iteration1} were computed on Mathematica :
\begin{align*}\label{eq:eigenvalues_fixed_size}
r_{+,-} = \frac{\left(2n - 2p  - 1 \rpm \sqrt{\Delta} \right)}{2n},
\end{align*}
with 
 $\Delta = \left(1-2p\right) ^2 + \frac{4p(1-p)}{n}.$
(The choice of notation $r_+,r_-$ 
comes from the use of either $+\sqrt{\Delta}$ or $-\sqrt{\Delta}$ in the root expression.)
For large $n$ and, since $\sqrt{\Delta} = |1-2p| + o(1)$, the eigenvalues are of order 
\[r_+ =1-\frac{2p}{n}+\ooneovern , r_- =1-\frac{1}{n}+\ooneovern,\] 
which happen to be the diagonal elements of the matrix, as if neglecting the $1/n^2$ term on the upper right hand side.

The corresponding eigenvectors are 
\[
v_+ = \left(\frac{2p-1 - \sqrt{\Delta}}{2(p-1)} , 1\right)^T = \left(\frac{2p-1}{p-1} + o(1),1\right)^T,
\]
and
\[
v_- = \left(\frac{2p-1 +  \sqrt{\Delta}}{2(p-1)} , 1\right)^T = (o(1),1)^T,
\]
unless $p=1/2$ in which case higher order term needs to be taken into account.
\\

\emph{\bf Interpretation:}

The couple $(r_-,v_-)$ is independent of $p$ for large $n$. The eigenvector $v_-$ has a vanishingly small component in $\E({\overline x})^2$ and  concerns asymptotically exclusively $\E \overline {x^2}$ which is then essentially equivalent to $\E \Var(x)$. It follows from Lemma \ref{lem:gossip_moment} that every gossip iteration will contract this quantity by $1-\frac{1}{n}$, and from Lemma \ref{lem:replacement_matrix_moment} that every replacement will also contract that quantity by $1-\frac{1}{n}$ (when neglecting $\E({\overline x})^2$) before adding a constant term. If the constant term was set to $0$, $\E \overline {x^2}$  would thus contract at a rate $1-\frac{1}{n}\simeq r_-$ at every iteration, independently of whether a gossip or replacement takes place. For large $n$, the couple $(r_-,v_-)$ can thus be related to the contraction of $\E \overline {x^2}$

The couple $r_+,v_+$, on the other hand, does strongly depend on $p$. In particular, $r_+ = 1$ if $p=0$. Remember that the square average remains unchanged when gossip iterations occur. But when a replacement occurs, it follows from Lemma \ref{lem:replacement_matrix_moment} that the new square average $\E({\overline x'})^2$ contains a contribution $\frac{n-2}{n} = (1-\frac{2}{n})$ of the previous value, some constant contribution related to the arriving agent, and a vanishingly small contribution related to the average square value (assuming that $\E({\overline x})^2$ and $\E \overline {x^2}$ are of the same order of magnitude, which is the case in $v_+$ when $p=0$). Hence, if we were to set the independent term $b_r$ at 0 in \eqref{eq:exp_replacement} corresponding to $\sigma^2=0$, the square average would be multiplied by $(1-\frac{2}{n})$ at each replacement and by 1 otherwise. Since replacement occur with a probability $p$, this yields a rate $p(1-\frac{2}{n}) + (1-p) =   1-\frac{2p}{n}\simeq r_+$.

Observe now that there is a transition at $p=\frac{1}{2}$. Indeed, for $p>\frac{1}{2}$, corresponding to frequent replacements, the largest eigenvalue is $r_-$. The rate of convergence to the steady state depends thus on the rate of variance reduction, which is independent of $p$ (for large $n$). On the other hand, for $p<\frac{1}{2}$, the convergence is generally dominated by phenomena related to the convergence of the the square average, which does depend on $p$ as this quantity only changes when a replacement takes place.

\section{Growing system without departure}\label{sec:growing}

We focus now on systems whose sizes grow unbounded because new agents keep joining while no agent ever leaves: Similarly to the fixed-size model described in Section~\ref{sec:probabilistic-model}, at each time, a new agent joins the system with probability $p_n$ or a gossip step occurs with probability $1-p_n$, where $n$ is the number of agent in the system. The discussion of the dependence of $p_n$ on $n$ is deferred to the end of the section. When no ambiguity is possible, we will simply use $p$. We assume again that the initial value of every agent when joining the system is a random variable with zero mean and variance $\sigma^2$. We denote by $t_n$ the time just after the arrival of the $n$-th agent, and we let $K_n:= t_{n+1} - t_{n}-1$ be the number of gossip steps taking place between the arrival of agents $n$ and $n+1$. Both the $t_n$ and $K_n$ are random variables, and the $K_n$ follow a
geometric distribution with parameter $p_n$ so that $\PP (K_n = k) = (1-p_n)^{k}p_n$.
Since we are only interested in expected quantities and that the set of sequences $t_n$ which are bounded has probability $0$, we assume in the sequel that sequence $t_n$ is unbounded.

We will focus on the values of the expected average of the square $\E(\overline{x^2})$ and the expected square of the average $\E(\bar{x}^ 2)$ at the times $t_n$, just after the arrivals of the $n$-th agents. 
We will see that the evolution of these values can
be described by a two-dimensional linear system. This system is here time-varying because $n$ is not constant, but the absence of departures makes it triangular, and hence easier to analyze. We first provide the evolution of the expectation of vector $X = (\bar{x}^2 , \overline{x^2})^T$ between two agent arrivals.

\begin{lemma}\label{lem:K-gossip-steps}
Let $x$ be the state of the system after the arrival of the $n^{th}$ agent, and suppose that at every time-step, an arrival takes place with probability $p$ and a gossip with probability $1-p$. Let then $x'$ be the state after the arrival of agent $n+1$, so after the first arrival event. There holds
$\E X' = A_G \E X$
where
\[
A_G = \prt{\begin{array}{cc}
1& 0\\
1 - \gamma & \gamma
\end{array}} \text{ with } \gamma = \frac{n}{n - 1 + \frac{1}{p}}.
\]
\end{lemma}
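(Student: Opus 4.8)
The plan is to collapse the entire inter-arrival evolution into a single matrix geometric series built from the gossip matrix $A_g$ of Lemma~\ref{lem:gossip_moment}. The enabling observation is that between the arrival of agent $n$ and that of agent $n+1$ the system neither gains nor loses an agent, so its size stays constant and equal to $n$ throughout this phase. Hence each of the $K_n$ gossip steps occurring in between is governed by one and the same matrix $A_g$ evaluated at this fixed $n$, and the whole phase acts on $\E X$ as a fixed linear map, which I expect to be exactly $A_G$. This constancy of the size is precisely what keeps the per-phase dynamics a single linear operator, in contrast with the mixed replacement/gossip step of the previous section.

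First I would condition on the number $K_n$ of gossip steps separating the two arrivals. Writing $X$ for the moment vector just after the $n$-th arrival and $X'$ for the vector reached at the end of the gossip phase, I would iterate the one-step identity of Lemma~\ref{lem:gossip_moment}: conditionally on $K_n=k$ the $k$ gossiping pairs are still drawn uniformly and independently, so the conditioning by the waiting time introduces no bias, and taking expectations one step at a time gives
\[
\E\!\left[X' \mid K_n=k\right] = A_g^{\,k}\, \E X .
\]
Averaging over $K_n$, which is geometric with $\PP(K_n=k)=(1-p)^k p$, then yields $\E X' = \big(p\sum_{k\ge 0}(1-p)^k A_g^{\,k}\big)\E X$.

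It remains to evaluate this matrix series and match it to $A_G$. The series converges because the spectral radius of $(1-p)A_g$ equals $\max\{1-p,(1-p)(1-\tfrac1n)\}=1-p<1$, and it sums to $p\,(I-(1-p)A_g)^{-1}$. To read off the entries I would exploit that each $A_g^{\,k}$ is row-stochastic with first row $(1,0)$, being a power of the row-stochastic, lower-triangular $A_g$ whose eigenvalues are $1$ and $1-\tfrac1n$; the convex combination defining $A_G$ therefore inherits both properties, so its first row is $(1,0)$ (reflecting that gossip preserves $\E\bar x^2$) and its rows sum to $1$. Only the $(2,2)$ entry then has to be computed, and since $A_g^{\,k}$ has $(2,2)$ entry $(1-\tfrac1n)^k$ it equals
\[
p\sum_{k\ge0}(1-p)^k(1-\tfrac1n)^k = \frac{p}{1-(1-p)(1-\tfrac1n)} = \frac{pn}{p(n-1)+1} = \frac{n}{\,n-1+\tfrac1p\,} = \gamma ,
\]
which identifies the matrix with $A_G$ and forces the $(2,1)$ entry to be $1-\gamma$.

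The one step I would treat with real care — the remainder being routine algebra and a convergent Neumann series — is the independence argument underlying $\E[X'\mid K_n=k]=A_g^{\,k}\E X$: one must verify that conditioning on the geometric waiting time $K_n$, an event concerning only the sequence of event \emph{types}, leaves the uniform and independent selection of the gossiping pairs undisturbed, and that the moments carried into the phase at $t_n$ are independent of the gossip randomness still to come. Once this is granted, exchanging the expectation over $K_n$ with the matrix powers and collapsing the geometric series to $\gamma$ are immediate.
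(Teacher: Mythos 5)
Your proposal is correct and follows essentially the same route as the paper: condition on the geometric number $K_n$ of gossip steps, apply Lemma~\ref{lem:gossip_moment} iteratively to get $\E(X'\mid K_n=k,x)=A_g^k x$, and collapse the resulting series to $p\left(I-(1-p)A_g\right)^{-1}=A_G$. The only difference is cosmetic: where the paper identifies the inverse with $A_G$ by direct computation, you read off the entries via row-stochasticity and the lower-triangular structure of $A_g^k$, and you make explicit the convergence and independence points that the paper leaves implicit.
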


\begin{proof}
To obtain the expectation of vector $X'$ after the gossip, we condition with regards to $K_n$ which is by definition the number of gossip steps having taken place. By Lemma~\ref{lem:gossip_moment}, there holds
\begin{align*}
\E(X'|x)  & = \ssum_{k=0}^{+\infty} \E(X'|K_n = k,x)\PP(K_n = k) \\
&= \ssum_{k=0}^{+\infty} A_g^k X (1-p)^kp
= p\ssum_{k=0}^{+\infty} (1-p)^k A_g^k X \\
&= p(I - (1-p)A_g)^{-1} X = A_G X.
\end{align*}
Recalling that $\E(\E(X'|x)) = \E(X')$, the expectation of the previous equation provides the result.
\end{proof}
A consequence of this lemma is that between two agent arrivals, the expected variance evolves autonomously as
\begin{equation}\label{eq:Var-after-gossip-steps}
\E \Var(x') = \gamma \E \Var(x). 
\end{equation}
Applying Lemma~\ref{lem:arrival_moments_n+1}, we obtain the evolution of the variance after agent arrivals.
\begin{lemma}
Recall that $x(t_n)$ is the state of the system just after the arrival of the $n$-th agent and $x(t_{n+1})$ the state just after the arrival of the $n+1$-th agent, so that $x(t_{n+1})$ is obtained starting from state $x(t_n)$ and applying $K_n$ gossip steps and an arrival. Then, there holds
\begin{align}
\E(\bar{x}^2(t_n)) &= \frac{\sigma^2}{n}, \label{eq:proba_sq}\\
(n+1)\E\Var(x(t_{n+1})) &= \gamma n\E\Var(x(t_{n})) + \sigma^2.\label{eq:proba_iteration3}
\end{align}
\end{lemma}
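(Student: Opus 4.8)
The plan is to prove the two identities in the order stated, because the square-mean formula \eqref{eq:proba_sq} is exactly what will be needed to close the variance recursion \eqref{eq:proba_iteration3}. In both parts I would decompose the passage from $t_n$ to $t_{n+1}$ into the two stages already analysed above: a gossip phase of $K_n$ steps among the $n$ agents present, described in expectation by $A_G$ (Lemma~\ref{lem:K-gossip-steps}) and contracting the expected variance by the factor $\gamma$ through \eqref{eq:Var-after-gossip-steps}, followed by a single arrival taking the size from $n$ to $n+1$, described by $A_a,b_a$ (Lemma~\ref{lem:arrival_moments_n+1}). I write $\tilde x$ for the intermediate state reached at the end of the gossip phase.

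For \eqref{eq:proba_sq} I would proceed by induction on $n$. The base case $n=1$ holds since just after the first arrival the only stored value is drawn from $\DD$, whence $\E(\bar{x}^2(t_1))=\sigma^2$. For the inductive step, gossip leaves the empirical mean unchanged (the first row of $A_g$, and hence of $A_G$, is $(1,0)$), so $\E(\bar{\tilde x}^2)=\E(\bar{x}^2(t_n))$; the arrival then acts on the first coordinate via Lemma~\ref{lem:arrival_moments_n+1} as
\[ \E(\bar{x}^2(t_{n+1})) = \frac{n^2}{(n+1)^2}\,\E(\bar{x}^2(t_n)) + \frac{\sigma^2}{(n+1)^2}. \]
Inserting the hypothesis $\E(\bar{x}^2(t_n))=\sigma^2/n$ and simplifying gives $\sigma^2/(n+1)$, closing the induction.

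For \eqref{eq:proba_iteration3} I would combine the two stages at the level of the variance. The gossip phase yields $\E\Var(\tilde x)=\gamma\,\E\Var(x(t_n))$, and since gossip preserves the mean pathwise we also have $\E(\bar{\tilde x}^2)=\E(\bar{x}^2(t_n))=\sigma^2/n$ by \eqref{eq:proba_sq}. I would then apply Lemma~\ref{lem:arrival_moments_n+1} to the two coordinates of $\tilde X$, form the expected variance $\E(\overline{x^2})-\E(\bar{x}^2)$ at $t_{n+1}$, multiply through by $(n+1)$, and regroup using $n\,\E(\overline{\tilde x^2})=n\,\E\Var(\tilde x)+n\,\E(\bar{\tilde x}^2)$ so as to isolate the term $n\,\E\Var(\tilde x)=\gamma n\,\E\Var(x(t_n))$.

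The main obstacle is precisely this regrouping: unlike a gossip step, an arrival is not autonomous in the variance, because Lemma~\ref{lem:arrival_moments_n+1} rescales $\bar{x}^2$ and $\overline{x^2}$ by the different factors $n^2/(n+1)^2$ and $n/(n+1)$. As a result the post-arrival variance retains a residual term $\frac{n}{n+1}\E(\bar{\tilde x}^2)$ alongside the two $\sigma^2$-contributions, which after multiplication by $(n+1)$ read $+\sigma^2$ and $-\frac{\sigma^2}{n+1}$. The crux is that, on substituting the exact value $\E(\bar{\tilde x}^2)=\sigma^2/n$ furnished by \eqref{eq:proba_sq}, the residual equals $\frac{\sigma^2}{n+1}$ and cancels the $-\frac{\sigma^2}{n+1}$, leaving exactly $\sigma^2$; this cancellation is the reason \eqref{eq:proba_sq} must be established first and is what makes the recursion affine. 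A cleaner equivalent route, which I would mention as a check, uses the elementary update $S'=S+\frac{n}{n+1}(y-\mu)^2$ for the total squared deviation when a point $y$ is appended to $n$ points of mean $\mu$: with $y$ independent of the configuration, $\E y=0$, $\E y^2=\sigma^2$ and $\E\mu^2=\sigma^2/n$, the expected increment $\frac{n}{n+1}\E((y-\mu)^2)$ is exactly $\sigma^2$, and since $S=n\Var(\cdot)$ this reproduces \eqref{eq:proba_iteration3}.
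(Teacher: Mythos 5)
Your proof is correct and essentially reproduces the paper's argument: the same decomposition into the $K_n$-step gossip phase (Lemma~\ref{lem:K-gossip-steps}) followed by a single arrival (Lemma~\ref{lem:arrival_moments_n+1}), the same induction on the first coordinate to get \eqref{eq:proba_sq}, and the same substitution of \eqref{eq:proba_sq} into the variance recursion so that the residual square-mean terms cancel --- the paper merely packages this as the single identity $\E X(t_{n+1})=A_a A_G \E X(t_n)+b_a$ and left-multiplies by $(-1,1)$. The pathwise identity $S'=S+\frac{n}{n+1}(y-\mu)^2$ you offer as a check is also sound (and a nice elementary confirmation), but it is an optional extra rather than a different proof.
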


\begin{proof}
 Applying first Lemma~\ref{lem:K-gossip-steps} and then Lemma~\ref{lem:arrival_moments_n+1}, we obtain
 \begin{equation*}\label{eq:linear-system-AaAg}
 \E X(t_{n+1}) = A_a A_G \E X(t_{n}) + b_a.
 \end{equation*}
 Since $A_a$ is diagonal and $A_G$ is lower triangular, the first line of the previous equation provides
 \[
 \E(\bar{x}^2(t_{n+1})) = \frac{n^2}{(n+1)^2} \E(\bar{x}^2(t_{n})) + \frac{\sigma^2}{(1+n)^2},
 \]
 which, recalling that initially $\E(\bar{x}^2(0)) = 0$, grants equation~\eqref{eq:proba_sq} by induction on $n$.
 To obtain equation~\eqref{eq:proba_iteration3}, recall that $\Var(x) = (-1, 1)X$ so that
$\E\Var(x(t_{n+1}))$
 \begin{align*}
 =& (-1, 1) (A_a A_G \E X(t_{n}) + b_a) \\
 =&  \left(-\frac{n^2}{(n+1)^2} + (1-\gamma)\frac{n}{n+1} , \gamma \frac{n}{n+1}\right) \E X(t_{n}) \\&+ (-1, 1) b_a \\
 =& \left(-\frac{n^2}{(n+1)^2} + \frac{n}{n+1}\right)\E(\bar{x}^2(t_{n})) + \frac{n\gamma}{n+1} \E\Var(x(t_{n})) \\
 &+ (-1, 1) b_a \\
 =& \frac{n}{(n+1)^2}\E(\bar{x}^2(t_{n})) + \frac{n\gamma}{n+1}\E\Var(x(t_{n})) + \frac{n}{(n+1)^2} \sigma^2.
\end{align*}
We conclude using equation~\eqref{eq:proba_sq}.
\end{proof}

This recursion allows obtaining the following theorem characterizing the asymptotic variance, and proved in Appendix \ref{appen:proof_growing}.
\smallskip
\begin{theorem}\label{thm: growing system}
Consider the growing system without departure, and remember that when the system with $n$ agents undergoes an event, $p_n$ describes the probability of the $n+1$-th agent arrival rather than a gossip step.

\noindent (i) If $p_n = p > 0$ for all $n\geq n_0$ for some $n_0$, then 
$$\lim_{n\to \infty} \E \Var(x(t_n)) = p\sigma^2.$$

\noindent (ii) If $\lim_{n\to \infty} p_n = 0$, then $\lim_{n\to \infty} \E \Var(x(t_n)) = 0$.
\end{theorem}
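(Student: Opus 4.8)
The plan is to reduce the whole statement to a single scalar affine recursion. Writing $v_n := \E\Var(x(t_n))$ and dividing \eqref{eq:proba_iteration3} by $n+1$, the sequence obeys
\begin{equation*}
v_{n+1} = a_n v_n + b_n, \quad a_n = \frac{\gamma_n\, n}{n+1}, \quad b_n = \frac{\sigma^2}{n+1},
\end{equation*}
where $\gamma_n = \frac{n}{n-1+1/p_n} = \frac{np_n}{1+(n-1)p_n}$ now carries the index $n$ because $p_n$ may vary. The multiplier $a_n$ tends to $1$ while the forcing $b_n$ tends to $0$ simultaneously, so the limit cannot be read off from the recursion alone and must come from the precise rates. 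The first computation I would carry out is to put the two relevant quantities in closed form; a direct simplification (the numerator collapses to $(n+1)-p_n$) gives
\begin{equation*}
1 - a_n = \frac{(n+1)-p_n}{(n+1)\prt{1+(n-1)p_n}}, \quad \frac{b_n}{1-a_n} = \frac{\sigma^2\prt{1+(n-1)p_n}}{(n+1)-p_n}.
\end{equation*}

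The engine of the argument is a standard lemma on vanishing-gain affine recursions: if $a_n \in [0,1)$, $\ssum_n (1-a_n) = \infty$, and $b_n/(1-a_n) \to L$, then $v_n \to L$. I would prove it in one paragraph by setting $e_n = v_n - L$, rewriting the recursion as $e_{n+1} = a_n e_n + (1-a_n)\epsilon_n$ with $\epsilon_n = \frac{b_n}{1-a_n} - L \to 0$, and showing $\limsup_n |e_n| \le \eta$ for every $\eta > 0$; the crucial fact is $\prod_k a_k \to 0$, which follows from $\ssum_k(1-a_k) = \infty$ together with $\log a_k \le a_k - 1$.

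Granting this lemma, each case reduces to a limit computation on the two closed forms. For (i), with $p_n \equiv p$ for $n\ge n_0$, one reads off $\frac{b_n}{1-a_n} \to p\sigma^2$, while $1-a_n \sim \frac{1}{np}$ gives $\ssum_n(1-a_n) = \infty$, so $v_n \to p\sigma^2$. (Alternatively, part (i) can be handled by solving the recursion $n v_{n+1}/\gamma_n\cdots$ explicitly through a telescoping product and Stirling's formula, but the lemma treats both cases uniformly.) For (ii), with $p_n \to 0$, the numerator $\sigma^2\prt{1+(n-1)p_n}$ is of order $1+np_n = o(n)$ against a denominator of order $n$, hence $\frac{b_n}{1-a_n} \to 0$ and the lemma yields $v_n \to 0$.

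The step I expect to be the main obstacle is verifying $\ssum_n(1-a_n) = \infty$ in case (ii), since $p_n \to 0$ at an unspecified rate and $1-a_n$ behaves like $\frac{1}{1+np_n}$. I would settle it by a dichotomy: if $np_n \le 1$ for infinitely many $n$, the corresponding terms satisfy $1-a_n \ge \frac{n}{2(n+1)}$ and are bounded below by a positive constant; otherwise $np_n > 1$ eventually, and since $p_n \to 0$ we get $\frac{1}{1+np_n} \ge \frac{1}{2np_n} \ge \frac{1}{2\varepsilon n}$ for $n$ large, so the series dominates a divergent multiple of $\ssum 1/n$. In either case the series diverges, which closes the argument.
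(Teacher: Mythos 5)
Your proof is correct---I checked the algebra (the collapse $1-a_n=\frac{(n+1)-p_n}{(n+1)(1+(n-1)p_n)}$ and the closed form $\frac{b_n}{1-a_n}=\frac{\sigma^2(1+(n-1)p_n)}{(n+1)-p_n}$ are right), the vanishing-gain lemma and its contraction proof are sound, and the dichotomy establishing $\sum_n(1-a_n)=\infty$ in case (ii) works---but it takes a genuinely different route from the paper. The paper rescales to $W_n=n\,\E\Var(x(t_n))$, which satisfies $W_{n+1}=\gamma_n W_n+\sigma^2$, unrolls this recursion into $W_n=W_{n_0}\Pi_{m=n_0}^{n-1}\gamma_m+\sigma^2\sum_{s=n_0+1}^{n}\Pi_{m=s}^{n-1}\gamma_m$, and bounds each product explicitly: writing $q=1/p-1$, a $\log(1-x)\le -x$ estimate combined with an integral comparison gives $\Pi_{m=s}^{n-1}\gamma_m\le\left(\frac{s+q}{n+q}\right)^{q}$ whenever $p_m\le p$, whence $\limsup_n W_n/n\le p\sigma^2$, with a symmetric $\liminf$ bound when $p_m\ge p$; part (i) then follows from the two-sided bound, and part (ii) by applying the upper bound with an arbitrarily small majorant $p$. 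Your approach trades those explicit product estimates for a qualitative, reusable lemma on affine recursions with vanishing gains: it treats (i) and (ii) uniformly through the single limit $b_n/(1-a_n)\to L$, at the price of a separate verification that the gains are not summable (your dichotomy), a step the paper never needs because its product bound decays polynomially no matter how $p_n\to 0$. What the paper's computation buys is precisely that quantitative content---an $O(n^{-q})$ decay rate for the transient term, hence information on how fast the limit is approached---which your purely asymptotic lemma cannot give; what yours buys is modularity, a shorter argument, and a squeeze-free treatment of case (ii).
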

\bigskip

Theorem \ref{thm: growing system}(ii) shows that the system essentially converges to a consensus as soon as $p_n$ goes to zero, even if this convergence is very slow. This can also be interpreted in terms of $\E(K_n)$, the expected number of gossip steps between two arrivals. Since $K_n$ follows a geometric law of parameter $p_n$, $\E(K_n) = \frac{1-p_n}{p_n}$ which diverges to $+\infty$ when $p_n$ goes to zero. As a consequence, the system converges to a consensus as soon as the expected number of gossip steps between two consecutive arrivals diverges, and even if the number $\E(K_n)/n$ of gossips per agent between two consecutive arrivals tends to 0. Note, however, that each agent gets
involved (with probability 1) in infinitely many gossips when $\E(K_n)\to\infty$. The expected number of gossips in which an agent has been involved at time $t_n$ is indeed $2\frac{1}{n}\sum_{m=1}^n \E(K_m)$,
which grows unbounded.

By contrast, in the case of a fixed probability $p_n = p$ corresponding to a fixed $\E(K_n)=:K$ with $p=\frac{1}{K+1}$, agents have on average been involved in 
$
2\frac{1}{n}\sum_{m=1}^n \E(K_m)= 2K
$
gossips after any given arrival, which intuitively explains why the variance stays bounded away from 0. But the actual asymptotic value $ p\sigma^2 = \frac{\sigma^2}{K+1}$ obtained in Theorem \ref{thm: growing system}(i) is remarkably high, and is actually the same as that obtained for the fixed-size system in Section \ref{sec:probabilistic-model} for large $n$.  As a basis for comparison, suppose we had first waited until the $n$ independent agents were present in the system, which would yield an expected variance $\sigma^2\frac{n-1}{n}$, and then performed the same number $nK$ of gossip averaging operations between randomly selected pairs of nodes. It follows from application of equation~\eqref{eq:variance_gossip} that the expected variance would then have been
$$
\frac{n-1}{n}\sigma^2 \left(1-\frac{1}{n}\right)^{nK}\to_{n\to \infty} \sigma^2 e^{-K},
$$
which is significantly lower than $\sigma^2/(K+1)$ (for $K= 5$, the ratio of variance would be $\frac{e^{-5}}{1/6}\simeq 0.04$). The dynamics of the system composition deteriorates thus considerably the performances in terms of variance reduction.

We now propose possible interpretations of the evolutions of $p_n$ and $\E(K_n)$ with the number of agents $n$.
Suppose that we interpret our discrete $t$ as the sampling of a real continuous time variable $\tau$ at those times $\tau_t$ at which an event occurs. It is again reasonable to assume the interaction rate of an agent to be independent of the system size, so that the total number of gossips per unit of time $\tau$ would grow linearly with $n$, as say $\lambda_g n$. Suppose first that the agents arrive at a fixed rate $\lambda_a$. In that case, the probability of agent arrival is $\lambda_a/(\lambda_a+\lambda_g n)$ and the number of gossips between two arrivals would be linearly growing with $n$ and $\E(K_n) = n \lambda_g/\lambda_a$. Theorem \ref{thm: growing system}(ii) shows then that the variance would converge to 0.

But one could also imagine a linearly growing rate of arrivals $\lambda_r n$. This would for example be the case if the system attraction were growing with its size or if the arrivals resulted from some form of reproduction process. 
The probability of agent arrival would then be constant $p_n = p = \lambda_a/(\lambda_a+\lambda_g)$ and so will be the number of gossip iterations between two arrivals $\E K_n = K = (\lambda_g n)/(\lambda_r n)$, leading to a finite variance $p\sigma^2$.

\section{Conclusions}

We have made first steps in the analysis of open multi-agent systems, where agents can leave and arrive. We have focused on analysing open systems subject to 
a classical multi-agent algorithm :
on all-to-all pairwise gossips. We have shown that these systems could be characterized by fixed-size linear systems in terms of some of the moments. Interestingly, we have also observed that the open character of the system may result in a significant performance reduction in terms of variance reduction. 

Ongoing works include the generalization of this approach to systems where arrivals and departure follow more complex patterns, or to more complex interactions, such as gossips restricted to a graph. 

Another challenge left untackled so far is the characterization of the variability for individual realizations. Our results characterize for instance the expected value of the disagreement in the system $ \E\Var(x(t))$ where $\Var(x(t)) = \frac{1}{n(t)} \sum_{i\in\NN(t)}(x_i(t)-\bar x(t))^2$, but do not directly allow deducing the width of the probability distribution of $\Var(x(t))$.

\begin{appendix}
 
\subsection{Proof of Theorem \ref{thm: growing system} }\label{appen:proof_growing}

We prove the following proposition, which implies Theorem \ref{thm: growing system} when applied to $W_n = n \E \Var(X(t_{n}))$.
\smallskip
\begin{proposition}
Let $n_0 \ge 2$. Let $\gamma_n = \frac{n}{n-1+\frac{1}{p_n}}$ with $p_n \in (0,1)$. Let $p \in (0,1)$ and consider the sequence defined by $W_1= 0$ and 
\begin{equation}\label{eq:recur_W}
W_{n+1}  = W_n \gamma_n + \sigma^2.
\end{equation}
a) If $p_n \leq p$ for all $n\geq n_0$, then $\limsup_{n\to\infty } \frac{W_n}{n}\leq p\sigma^2$\\
b) If $p_n \geq p$ for all $n\geq n_0$, then $\liminf_{n\to\infty } \frac{W_n}{n}\geq p\sigma^2$.
\end{proposition}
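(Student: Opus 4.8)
The plan is to avoid solving the recurrence~\eqref{eq:recur_W} explicitly and instead to bound $W_n/n$ from above and below by comparing $\{W_n\}$ with suitably chosen \emph{affine} barriers $\phi_n = \alpha n + \beta$. The starting observation is that the sequence is nonnegative: $W_1 = 0$, $\gamma_n > 0$ and $\sigma^2 \ge 0$, so by induction $W_n \ge 0$ for all $n$. It is convenient to rewrite $\gamma_n = \frac{n}{n+c_n}$ with $c_n = \frac{1}{p_n} - 1 > 0$, which makes $\gamma_n$ manifestly decreasing in $c_n$, hence increasing in $p_n$. Consequently, under the hypothesis of (a) we have $\gamma_n \le \bar\gamma_n := \frac{n}{n+c}$ for $n \ge n_0$, with $c = \frac{1}{p}-1$, while under the hypothesis of (b) we have $\gamma_n \ge \bar\gamma_n$.

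The key computation is to test the affine candidate $\phi_n = \alpha n + \beta$ against the reference map $w \mapsto \bar\gamma_n w + \sigma^2$. Multiplying the defect $\phi_{n+1} - \bar\gamma_n \phi_n - \sigma^2$ by the positive factor $n+c$, a direct expansion gives
\begin{equation*}
(n+c)\prt{\phi_{n+1} - \bar\gamma_n \phi_n - \sigma^2} = \prt{\tfrac{\alpha}{p} - \sigma^2}\, n + c\,\prt{\alpha + \beta - \sigma^2},
\end{equation*}
where I used $1 + c = \frac{1}{p}$. The slope is therefore critical at $\alpha = p\sigma^2$: if $\alpha > p\sigma^2$ the right-hand side is eventually positive (for every fixed $\beta$), so $\phi$ is a supersolution for all $n$ beyond some index $n_1$; if $\alpha < p\sigma^2$ it is eventually negative, so $\phi$ is a subsolution.

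For part (a) I would fix $\epsilon > 0$, set $\alpha = p\sigma^2 + \epsilon$, and choose $\beta$ large enough that $\phi_{n_1} \ge W_{n_1}$, where $n_1$ is the index past which $\phi$ is a supersolution. An induction then propagates $W_n \le \phi_n$: assuming $W_n \le \phi_n$, the bounds $\gamma_n \le \bar\gamma_n$ together with $W_n \ge 0$ and $\bar\gamma_n > 0$ give $W_{n+1} = \gamma_n W_n + \sigma^2 \le \bar\gamma_n W_n + \sigma^2 \le \bar\gamma_n \phi_n + \sigma^2 \le \phi_{n+1}$. Dividing by $n$ yields $\limsup_n W_n/n \le \alpha = p\sigma^2 + \epsilon$, and letting $\epsilon \to 0$ gives (a). Part (b) is symmetric: fix $\epsilon \in (0, p\sigma^2)$, take $\alpha = p\sigma^2 - \epsilon$ and $\beta$ negative enough that $\psi_{n_1} \le W_{n_1}$, so that $\psi_n = \alpha n + \beta$ is a subsolution; the reversed inequalities $\gamma_n \ge \bar\gamma_n$ and $W_n \ge 0$ propagate $W_n \ge \psi_n$, whence $\liminf_n W_n/n \ge p\sigma^2 - \epsilon$ and then $\ge p\sigma^2$.

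The main obstacle is really just the defect identity and its interpretation: recognizing that an affine barrier is the right test function and that its slope must cross the threshold $p\sigma^2$, equivalently that $\tfrac{\sigma^2}{1+c} = p\sigma^2$ is the asymptotic value. Once this identity is in hand the comparison inductions are routine; the only points requiring care are that the barrier inequalities hold only eventually (so each induction must be started at $n_1$) and that the propagation steps invoke the nonnegativity of $W_n$ together with the inequality $\gamma_n \le \bar\gamma_n$ in part (a) and $\gamma_n \ge \bar\gamma_n$ in part (b).
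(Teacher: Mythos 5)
Your proof is correct, but it follows a genuinely different route from the paper's. The paper solves the recursion explicitly, writing $W_n$ as $W_{n_0}\Pi_{m=n_0}^{n-1}\gamma_m + \sigma^2\sum_{s=n_0+1}^n \Pi_{m=s}^{n-1}\gamma_m$, and then estimates the products via $\log(1-x)\le -x$ and integral comparisons, obtaining $\Pi_{m=s}^{n-1}\gamma_m \le \left(\frac{s+q}{n+q}\right)^q$ with $q=\frac{1}{p}-1$, and finally bounds the resulting sum by $\frac{(n+q+1)^{q+1}}{q+1}$; part (b) is then handled by a "parallel reasoning" with the reversed logarithmic bound. You instead never solve the recursion: you compare $W_n$ against affine barriers $\phi_n=\alpha n + \beta$, and your defect identity $(n+c)\left(\phi_{n+1}-\bar\gamma_n\phi_n-\sigma^2\right) = \left(\frac{\alpha}{p}-\sigma^2\right)n + c\left(\alpha+\beta-\sigma^2\right)$ (which I checked; it is correct) isolates $\alpha = p\sigma^2$ as the critical slope, after which a monotone induction using $W_n\ge 0$ and $\gamma_n \le \bar\gamma_n$ (resp. $\ge$) propagates the comparison. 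Your approach is more elementary -- no logarithmic or integral estimates -- and it treats (a) and (b) symmetrically through a single computation, which is aesthetically cleaner than the paper's asymmetric treatment. What the paper's method buys in exchange is quantitative information: its explicit bound shows how fast $\frac{W_n}{n}$ approaches $p\sigma^2$ (the transient decays like $(n+q)^{-q}$), whereas your barrier argument only yields the limit. One point you should tighten in a written version: you define $n_1$ as the index past which $\phi$ is a supersolution and then choose $\beta$ to dominate $W_{n_1}$, but $n_1$ itself depends on $\beta$. This circularity is benign because the defect is increasing in $\beta$ (its coefficient is $c>0$), so enlarging $\beta$ to achieve $\phi_{n_1}\ge W_{n_1}$ can only preserve the supersolution property (and symmetrically, decreasing $\beta$ preserves the subsolution property in part (b)); alternatively, choosing $\beta \ge \sigma^2-\alpha$ in (a) and $\beta\le \sigma^2-\alpha$ in (b) makes the barrier inequality hold for \emph{all} $n$, removing the issue entirely. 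Also note that part (b) with your restriction $\epsilon\in(0,p\sigma^2)$ implicitly assumes $\sigma^2>0$; the degenerate case $\sigma^2=0$ is trivial by nonnegativity of $W_n$.
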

\smallskip
\begin{proof}
We prove the statement (a) of the proposition. Statement (b) can be obtained in a similar way. It follows from \eqref{eq:recur_W} that 
\begin{align}\label{eq:sep_Wn}
W_n = W_{n_0}\Pi_{m=n_0}^{n-1}\gamma_m  + \sigma^2 \sum_{s=n_0+1}^{n} \Pi_{m=s}^{n-1}\gamma_m,
\end{align}
with the convention $\Pi_{m=n}^{n-1} \gamma_m = 1$.

We use notation $q = \frac{1}{p}-1$ and $q_n = \frac{1}{p_n}-1$ so that $\gamma_n = \frac{n}{n+q_n} = 1 - \frac{q_n}{n+q_n}$. Since $p_n \le p$,
$q_n \ge q > 0$ and there holds
\[\gamma_n \le 1-\frac{n}{n+q}.\]
Denote $x_n = \frac{n}{n+q}$.
Using $\log (1-x_n) \leq -x_n$ (for $x_n<1$), we obtain for $s\geq n_0$
\begin{align}\label{eq:bound_log}
\log\prt{\Pi_{m=s}^{n-1}\gamma_m} & \leq -\sum_{m=s}^{n-1} x_m = -q\sum_{m=s+q}^{n+q-1} \frac{1}{m}.
\end{align}
Observe that $\sum_{m=s+q}^{n+q-1}\frac{1}{m}$ is an upper approximation of the integral $\int_{x=s+q}^{n+q}\frac{1}{x}dx$, and hence 
$$
\frac{1}{s+q} + \dots  + \frac{1}{n+q-1} \geq \log (n+q) -\log (s+q) \geq \log \frac{n+q}{s+q}.
$$

Reintroducing this in \eqref{eq:bound_log} yields 
\begin{align*}
\Pi_{m=s}^{n-1}\gamma_m  \leq \prt{\frac{s+q}{n+q}}^q, \hspace{.5cm}\forall s\geq n_0,
\end{align*}
and hence, noticing $W_n \ge 0$, \eqref{eq:sep_Wn} implies
\begin{align}\label{eq:Wn-intermediary-approx}
W_n &\leq W_{n_0} \prt{\frac{n_0+q}{n+q}}^q + \frac{\sigma^2}{(n+q)^q} \sum_{s=n_0+1}^{n} (s+q)^q.
\end{align}
By a change of variable, the sum in the last term can be is rewritten as
\begin{align*}
 \sum_{s=n_0+1}^{n} (s+q)^q &= \sum_{s=n_0+q+1}^{n+q} s^q
 \le \sum_{s=0}^{n+q} s^q \\&\le
 \int_{0}^{n+q+1} x^q dx
 \le \frac{(n+q+1)^{q+1}}{q+1}
\end{align*}
Introducing this in equation~\eqref{eq:Wn-intermediary-approx}, we obtain
\begin{align}\label{eq:bound_W_n_almost_final}
W_n \leq W_{n_0} \prt{\frac{n_0+q}{n+q}}^q + \frac{\sigma^2}{(n+q)^q}\frac{(n+q+1)^{q+1}}{q+1}.
\end{align}
The first term in \eqref{eq:bound_W_n_almost_final} decays to 0 when $n$ grows. Hence $\limsup \frac{W_n}{n} \leq \frac{\sigma^2}{q+1} = p\sigma^2$.

Part (b) follows a parallel reasoning using the upper bound $\log(1-\frac{1}{n}) \geq -\frac{1}{n-1}$.
\end{proof}

\end{appendix}

\end{document}